\documentclass[journal,comsoc]{IEEEtran}
%
\IEEEoverridecommandlockouts

\usepackage{amssymb}
\usepackage{bbm}
\usepackage{amsmath}
\usepackage{amsthm,amssymb}
\usepackage{caption}
\usepackage{cite}
\usepackage{mathrsfs}
\usepackage[displaymath,mathlines]{lineno}
\usepackage{color}
\usepackage{overpic}
\usepackage{pbox}
\usepackage{multicol}
\usepackage{amsthm}
\usepackage{epstopdf}

\usepackage[linesnumbered, ruled,vlined]{algorithm2e}
\usepackage{algorithmic}
\usepackage{soul}

\makeatletter
\newcommand{\doublewidetilde}[1]{{%
		\mathpalette\double@widetilde{#1}%
}}
\newcommand{\double@widetilde}[2]{%
	\sbox\z@{$\m@th#1\widetilde{#2}$}%
	\ht\z@=.5\ht\z@
	\widetilde{\box\z@}%
}
\makeatother

\newtheorem{lemma}{Lemma}



\setcounter{page}{1}

\begin{document}
\title{\huge Phase Shift Design for RIS-Aided Cell-Free Massive MIMO with Improved Differential Evolution \vspace{-0.2cm}}
\author{Trinh~Van~Chien, \textit{Member}, \textit{IEEE}, Cuong V. Le, Huynh~Thi~Thanh~Binh, \textit{Member}, \textit{IEEE}, Hien~Quoc~Ngo, \textit{Senior Member}, \textit{IEEE}, and  Symeon Chatzinotas, \textit{Fellow}, \textit{IEEE} \vspace{-1cm}
\thanks{T. V. Chien, Cuong V. L., and H. T. T. Binh are with the School of Information and Communication Technology (SoICT), Hanoi University of Science and Technology (HUST), Hanoi 100 000, Vietnam (e-mail: chientv@soict.hust.edu.vn, cuonglv.hust@gmail.com, binhht@soict.hust.edu.vn). H. Q. Ngo is with the  Institute of Electronics, Communications and Information Technology, Queen's University Belfast, Belfast, U.K (email: hien.ngo@qub.ac.uk). S. Chatzinotas is with the University of
Luxembourg (SnT), Luxembourg (e-mail: symeon.chatzinotas@uni.lu). This research is funded by
Hanoi University of Science and Technology (HUST) under project number T2022-TT-001. The work of H. Q. Ngo was supported by the U.K. Research and Innovation Future Leaders Fellowships under Grant MR/X010635/1. The work of Symeon Chatzinotas was supported by the Luxembourg National Fund (FNR)-RISOTTI–the Reconfigurable Intelligent Surfaces for Smart Cities under Project FNR/C20/IS/14773976/RISOTTI.
}}
\maketitle

\begin{abstract}
This paper proposes a novel phase shift design for cell-free massive multiple-input and multiple-output (MIMO) systems assisted by reconfigurable intelligent surface (RIS), which only utilizes channel statistics to achieve the uplink sum ergodic throughput maximization under spatial channel correlations. Due to the non-convexity and the scale of the derived optimization problem, we develop an improved version of the differential evolution (DE) algorithm. The proposed scheme is capable of providing high-quality solutions within reasonable computing time. 
Numerical results demonstrate  superior improvements of the proposed phase shift designs over the other benchmarks, particularly in scenarios where direct links are highly probable.
\end{abstract}
\vspace{-0.2cm}
\begin{IEEEkeywords}
Cell-free massive MIMO, reconfigurable intelligence surface, differential evolution.
\end{IEEEkeywords}
\IEEEpeerreviewmaketitle
\vspace*{-0.2cm}
\section{Introduction}
The next generation wireless systems are expected to provide very high connectivity for an extensive multitude of mobile devices. This poses major theoretical and practical challenges that require significant research beyond the state of the art. Cell-free massive multiple-input multiple-output (MIMO) has currently been considered as an emerging technology
to fulfil this requirement, for its ability to provide high macro-diversity and huge (virtual) array gain \cite{Ngo2017a}. However, in many practical scenarios, even with cell-free massive MIMO technology, some users may not receive a good quality of service due to high path loss with large obstacles and/or unfavourable scattering environments. One of the promising solutions to deal with the above  harsh propagation conditions is integrating reconfigurable intelligent surface (RIS) and cell-free massive MIMO. RIS is an effective energy-saving solution for enhancing wireless communication systems by carefully designing the phase shifts to obtain constructive combinations at receivers \cite{zhang2022active}. Thus, RIS-aided cell-free massive MIMO has received a lot of research interest recently \cite{9665300,al2021ris,van2021ris}.

Evolutionary algorithms (EAs) have gained significant attention due to their efficiency and scalability in solving real-world optimization problems. Owing to the complicated structure of future networks encompassing multiple integrated technologies, EAs such as the genetic algorithm (GA) have recently applied for resource allocation in 6G communications. This resoure allocation is based on either full channel state information \cite{9536231} or  channel statistics \cite{dai2022two} for the slow fading channel models. Among the class of EA algorithms, the differential evolution (DE) is one of the most powerful solvers to deal with numerical optimization problems \cite{das2016recent}. Similar to other EAs, DE initiates with a random population of individuals where each of them encodes for one solution. In next steps (which are also called generations), new solutions are produced from the current population using evolutionary operators, including crossover and mutation. Under the selection pressure, superior solutions are inclined to survive and impart their information to the subsequent generation. Through this iterative mechanism, solutions undergo refinement, eventually converge to a sub-optimal solution.
Due to its effectiveness and a compact structure, DE has been successfully applied for a wide range of real and complex optimization problems, such as engineering design, machine learning, data mining, planning and control \cite{das2016recent, 9743417}. However, the canonical DE algorithm still remains several limitations. Firstly, the standard DE uses only one mutation operator throughout the search process. Nevertheless, it is well-known that the performance of an evolutionary operator not only depends on the characteristics of the problem, but also the population's status. This means that even when chosen carefully, a mutation operator is only suitable at a certain stages of the evolution due to the changing population. Secondly, the DE performance  is strongly impacted by  control parameters, such as the crossover rate and the scale factor in the mutation operator. Since configuration of these parameters is problem-dependent, the algorithm necessitates careful tuning when applied to a specific real-world problem, which poses a significant issue in practice.   

In this paper, we demonstrate the possibility and effectiveness of the  phase shift design for cell-free massive MIMO with the support of an RIS to enhance the spectral efficiency of the uplink data transmission. Due to the non-convexity and the scale of the derived optimization problem, we propose an improved version of DE without suffering the previous mentioned limitations to find a sub-optimal solution. To the best of our knowledge, this is the first study exploiting the advantages of the DE  to address the long-term phase shift design for RIS-aided cell-free massive MIMO systems subject to spatial correlation between the scattering elements under the fast fading channel models, and practical conditions including imperfect channels and pilot contamination.\footnote{For slow-fading models, we may have sufficient time and radio resources to acquire highly accurate channel estimates. In contrast, in fast-fading scenarios, where radio resources are limited and coherence time is short, the estimation errors cannot be disregarded. Thus, practical conditions such as imperfect channels and pilot contamination should be taken into account.}
Our main contributions  can be briefly summarized as follows:
 $(i)$ we formulate a sum ergodic throughput optimization for the uplink data transmission of RIS-aided cell-free massive MIMO systems that designs the phase shifts based on the statistical channel information and spatial correlation between the scattering elements; $(ii)$ we develop an improved version of DE to find an efficiently sub-optimal solution to the phase shift design in polynomial time; and $(iii)$ numerical results show that our phase shift designs improve the uplink sum ergodic throughput more than $20\%$ compared to the state-of-the-art baselines. The results also verify the effectiveness and superiority of our proposed algorithm compared to both the canonical DE and GA.

\textit{Notation}: Upper and lower bold letters  denote matrices and vectors. A diagonal matrix created from the vector $\mathbf{x}$ is denoted by $\mathrm{diag} (\mathbf{x})$. The superscripts $(\cdot)^T$ and  $(\cdot)^{\ast}$ are the regular transpose and the complex conjugate. The notations $\mathcal{CN} (\cdot, \cdot)$ and $\mathcal{U}([a, b])$ denote the circularly symmetric Gaussian distribution and the uniform distribution in  $[a, b]$. The expectation and variance of a random variable are $\mathbb{E} \{ \cdot \}$ and $\mathsf{Var} \{ \cdot \}$. Finally, $\mathsf{Pr}(\cdot)$ is the probability of an event.
\vspace{-0.2cm}
\section{System Model and Uplink Ergodic Throughput}
\vspace{-0.1cm}
We consider an RIS-aided cell-free massive MIMO system where $M$ access points (APs) coherently serve $K$ users, all having a single antenna.  The system performance is enhanced by the assistance of an RIS  equipped with $N$ phase shift elements. Let us mathematically denote the phase shift matrix as $\pmb{\Phi} = \mathrm{diag}([e^{j \theta_1}, \ldots, e^{j \theta_N}])$, where $\theta_n \in [-\pi, \pi]$ is the phase applied to the $n$-th RIS element. The channel between AP~$m$ and user~$k$ in the isotropic fading environment $g_{mk}$ is distributed as $g_{mk} \sim \mathcal{CN}(0,\beta_{mk})$, where $\beta_{mk}$ represents the large-scale fading effects. Each pair of cascaded channels from AP~$m$ to user~$k$ through the RIS consists of the two channels: $\mathbf{h}_m \sim \mathcal{CN}(\mathbf{0}, \mathbf{R}_m)$ from AP~$m$ to the RIS and $\mathbf{z}_k \sim \mathcal{CN}(\mathbf{0}, \widetilde{\mathbf{R}}_k)$ from the RIS to user~$k$.\footnote{Rayleigh fading channels are  particularly well-suited for rich scattering environments in sub-6GHz mobile communications.} Here, $\mathbf{R}_m$, and $\widetilde{\mathbf{R}}_k \in \mathbb{C}^{N \times N}$ are the corresponding spatial correlation matrices. The received uplink signal at the CPU is formulated as
\begin{equation} \label{eq:rk}
r_k = \sqrt{\rho} \sum\nolimits_{m=1}^M \sum\nolimits_{k'=1}^K   \hat{u}_{mk}^\ast u_{mk'} s_{k'} + \sum\nolimits_{m=1}^M \hat{u}_{mk}^\ast w_{m},
\end{equation}
where $\rho$ is the normalized uplink signal-to-noise ratio (SNR) of user $k$; $w_m$ denotes the additive white Gaussian noise with zero mean and unit variance; $u_{mk} = g_{mk} + \mathbf{h}_{m}^H \pmb{\Phi} \mathbf{z}_{k}$ is the aggregated channel between user~$k$ and AP~$m$ and its linear mean square error estimate (LMMSE) is denoted as $\hat{u}_{mk}$ \cite{9665300}. Maximum-ratio combining is exploited in \eqref{eq:rk} to detect the desired signals since this linear processing works well for single-antenna APs and can be easily implemented in a distributed manner \cite{Ngo2017a}. Through the utilization of the use-and-then-forget channel capacity bounding technique applied to \eqref{eq:rk} \cite{9665300}, we can obtain the uplink ergodic throughput of user $k$  as 
\begin{equation} \label{eq:ULRate}
	R_{k} (\pmb{\Phi}) = B \left( 1 - \tau_p/\tau_c \right) \log_2 \left( 1 + \mathrm{SINR}_{k} (\pmb{\Phi}) \right), \mbox{[Mbps]},
\end{equation}
where $B$ [MHz] is the system bandwidth, $\tau_c$ is the number of symbols in each coherence interval in which $\tau_p$ symbols are dedicated to the pilot training phase; and the signal-to-interference-and-noise ratio (SINR)  is 
\begin{equation} \label{eq:ClosedFormSINR}
	\mathrm{SINR}_{k} (\pmb{\Phi}) = \rho  \left( \sum\nolimits_{m=1}^M \gamma_{mk} \right)^2 / (\mathsf{MI}_{k} + \mathsf{NO}_{k}),
\end{equation}
where $\gamma_{mk}$ denotes the variance of the channel estimate, which is defined as $\gamma_{mk} = \mathbb{E} \{ |\hat{u}_{mk}|^2 \} = \sqrt{p\tau_p}  \delta_{mk}c_{mk}$ with $c_{mk} =  \sqrt{p\tau_p} \delta_{mk} /(p\tau_p \sum_{k' \in \mathcal{P}_k} \delta_{mk'} + 1)$, $\delta_{mk'} = \beta_{mk} + \mathrm{tr}(\pmb{\Theta}_{mk})$, $\pmb{\Theta}_{mk} = \pmb{\Phi}^H \mathbf{R}_m \pmb{\Phi} \widetilde{\mathbf{R}}_{k} $, and $\mathcal{P}_k$ is the pilot reuse index set. In \eqref{eq:ClosedFormSINR}, the mutual interference $\mathsf{MI}_{k}$ and the noise $\mathsf{NO}_{k}$ are 
\begin{align} \label{eq:MIuk}
	&\mathsf{MI}_{k} = \rho \sum\nolimits_{k'=1}^K \sum\nolimits_{m=1}^M  \gamma_{mk} \delta_{mk'} + p \tau_p \rho \times  \notag \\
	&\sum\nolimits_{k' =1 }^K \sum\nolimits_{k'' \in \mathcal{P}_k}    \sum\nolimits_{m=1}^M  \sum\nolimits_{m'=1}^M c_{mk}c_{m'k} \mathrm{tr}( \pmb{\Theta}_{mk'} \pmb{\Theta}_{m'k''}) \\
	&+  p \tau_p \rho \sum\nolimits_{k' \in  \mathcal{P}_k} \sum\nolimits_{m=1}^M   c_{mk}^2 \mathrm{tr}(\pmb{\Theta}_{mk'}^2) \notag \\
	& + p \tau_p \rho \sum\nolimits_{k' \in \mathcal{P}_k \setminus \{ k\} }  \left(\sum\nolimits_{m=1}^M c_{mk} \delta_{mk'} \right)^2,\notag\\
	& \mathsf{NO}_{k} =\sum\nolimits_{m=1}^M \gamma_{mk}, 
\end{align}
which demonstrate that the ergodic throughput in \eqref{eq:ULRate} depends on the various factors of the system model, phase shift design, and propagation environment such as the near-far effects, spatial correlation, and channel estimation quality. 
\vspace*{-0.2cm}
\section{Phase Shift Design for Uplink Sum Ergodic Data Throughput Maximization}
In this section, we  formulate and solve the phase shift design problem that maximizes the sum ergodic throughput. 
\vspace*{-0.2cm}
\subsection{Problem Formulation}
The sum ergodic throughput for the uplink data transmission is formulated as
\begin{equation} \label{Prob:ULSumRate}
	\begin{aligned}
		& \underset{ \substack{\{ \theta_n \} } }{\mathrm{maximize}}
		& & f(\pmb{\Phi}) = \sum\nolimits_{ k=1 }^{K} w_k R_{k} (\pmb{\Phi}) \\
		& \,\,\textrm{subject to}
		& &  - \pi \leq \theta_n \leq \pi , \forall n =1 ,\ldots, N, 
	\end{aligned}
\end{equation}
where $w_k \geq 0$ is the weight that models the priority of user~$k$. Problem~\eqref{Prob:ULSumRate} should be applied for fast fading environments where the ergodic throughput is a relevant measurement metric by averaging over many realizations of small-scale fading coefficients. 
The phase shift design obtained by solving \eqref{Prob:ULSumRate} reduces the network planning cost since it can be utilized for multiple coherent intervals whenever the channel statistics unchanged. It belongs to the category of the long-term phase shift designs with our aim is to maximize the sum ergodic throughput relying on the statistical channel information comprising the large-scale fading coefficients and the spatial correlation among passive scattering elements of the RIS. We stress that, different from the short-term phase shift design in  previous works, our solution is of particular interest in practice since the solution can be deployed at least over many coherence intervals in which the channel statistics are unchanged. Due to the non-convexity and the complex expression of the SINR in \eqref{eq:ClosedFormSINR}, the global optimum to problem~\eqref{Prob:ULSumRate} is nontrivial to obtain. 
\vspace*{-0.2cm}
\subsection{Phase Shift Design with Improved Differential Evolution}
We present the improved DE algorithm for solving the phase shift design problem in \eqref{Prob:ULSumRate}. The main flow of the algorithm is kept as in the standard DE \cite{das2016recent}, outlined in Fig. \ref{fig:de}. In particular, the algorithm starts by initializing a population of individuals and maintains it during the search process. In each generation (or a main loop in the figure), the mutation and crossover operators are performed on every individual to create new offspring. Each generated offspring is then evaluated and compared directly to its parent, and whose that yields a higher ergodic throughput value will be selected for the next generation. 
However, compared to the canonical DE, our improved version possesses two additional features. Firstly, instead of using only one mutation strategy, we combine two different operators, each with its own advantages can complement the other. Secondly, the control parameters in the mutation and crossover operators are dynamically adapted according to the search behaviour instead of fixing values as in the standard DE. Detailed descriptions of the proposed algorithm will be provided in the next subsections.

\subsubsection{Solution representation}
The population $\mathcal{Q}$ consists of $I$ individuals, where the $i$th individual, i.e., $i \in \{1, \ldots, N \}$, is a  $N$-dimensional vector $\pmb{\theta}_{i} = \{\theta_{i1}, \theta_{i2}, \ldots, \theta_{iN}\}$ of real numbers in the range $[ - \pi, \pi]$ that represents a possible solution to the phase shift matrix $\pmb{\Phi}$. At the beginning of the algorithm, all the individuals, i.e., the phase shift coefficients, are randomly initialized in the feasible domain as follows:
\begin{equation} \label{eq:ga_init}
	\theta_{in} = -\pi + 2 \pi \tilde{\alpha}_{in},\ \forall n=1, \ldots N,
\end{equation}
where $\tilde{\alpha}_{in} \sim \mathcal{U}([0, 1])$. In each generation, mutation is first performed on $\pmb{\theta}_p$ to create a mutant vector $\pmb{\theta}_v$, which is then  combined with its parent $\pmb{\theta}_p$ to form a new solution $\pmb{\theta}_o$. 
\begin{figure}[t]
    \centering
    \includegraphics[width=0.48\textwidth]{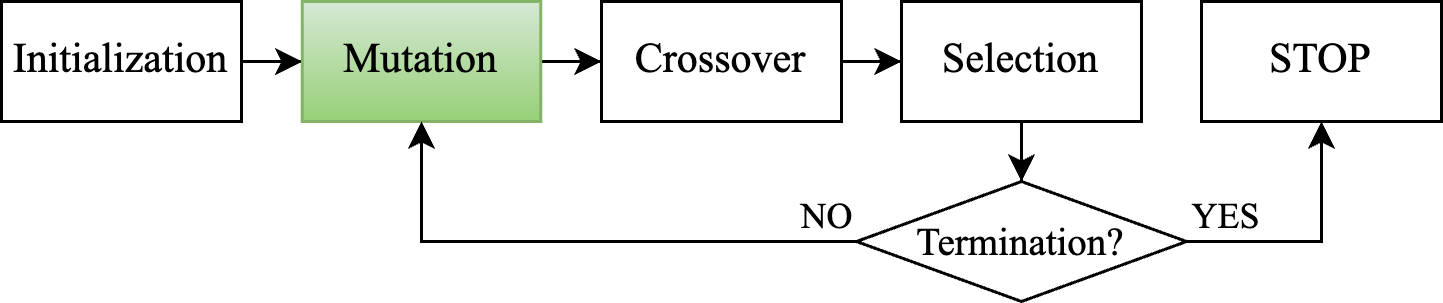}
    \caption{Basic steps of DE algorithm where the improvement is achieved for the mutation module.}
    \label{fig:de}
    \vspace{-0.7cm}
\end{figure}

\subsubsection{Mutation strategies}
Instead of using a single mutation operator, we employ two strategies with different characteristics, each suitable for specific problems or certain stages of evolution. Moreover, these mutation strategies are performed with different probabilities that are dynamically adjusted during the search according to their performance as: 
\begin{itemize}
    \item DE/$p$best/1 with the probability of $\lambda$:
    \begin{equation}
    \label{eq:pbest}
        \pmb{\theta}_{v} = \pmb{\theta}_{p\text{best}} + \mathsf{F} (\pmb{\theta}_{r_1} - \pmb{\theta}_{r_2}).
    \end{equation}
    
    \item DE/current-to-$p$best/1 with the probability of $(1 - \lambda)$:
    \begin{equation}
    \label{eq:ctpbest}
        \pmb{\theta}_{v} = \pmb{\theta}_p + \mathsf{F} (\pmb{\theta}_{p\mathrm{best}} - \pmb{\theta}_p) + \mathsf{F}(\pmb{\theta}_{r_1} - \pmb{\theta}_{r_2}),
    \end{equation}
\end{itemize}
where $\pmb{\theta}_{v}$ is the mutant vector corresponding to the parent solution $\pmb{\theta}_{p}$, $\pmb{\theta}_{p\text{best}}$ is selected randomly from the top  best solutions, $\pmb{\theta}_{r_1}$ and $\pmb{\theta}_{r_2}$ are selected randomly from the current population, and $\mathsf{F}$ is a scaled factor. The first operator, DE/$p$best/1 has been shown to provide fast convergence by combining the information of best solutions \cite{zhang2009jade}. Thus, this operator is suitable for unimodel problems or when the global basin was discovered in a multimodal problems. On the other hand, the random component $\mathsf{F}(\pmb{\theta}_{r_1} - \pmb{\theta}_{r_2})$ in the second operator results in global search behaviours, and therefore, this operator is appropriate at the beginning of the search when all the promising search regions need to be explored as soon as possible, especially in multimodal problems.

After every $G$ generations, the probability $\lambda$ is updated based on the effectiveness of the mutation strategies as follows:
\begin{equation}
\label{eq:lambda}
    \lambda = 
    \begin{cases}
        0.2, & \text{if } \frac{\Delta_1}{\mathrm{CFEs}_1} < \frac{\Delta_2}{\mathrm{CFEs}_2},\\
        0.8, & \text{otherwise,}
    \end{cases}
\end{equation}
where $\Delta_i \in \{ \Delta_1, \Delta_2 \}$ is the cumulative objective improvement gained by the $i-$th mutation strategy in previous $G$ generations and $\mathrm{CFEs}_i \in \{ \mathrm{CFEs}_1, \mathrm{CFEs}_2 \}$ is its number of consumed function evaluations (which indicates the computational resources that the $i$-th mutation strategy consumes in these previous $G$ generations). By this way, the effectiveness of mutation strategies is evaluated dynamically. The strategy with higher improvement rate is considered as more effective, and thus, is assigned higher probability and more computational resources.

\subsubsection{Crossover}
After the mutation, the parent vector $\pmb{\theta}_{p}$ is combined with the corresponding mutant vector $\pmb{\theta}_{v}$ using the binomial crossover to form the trial solution $\pmb{\theta}_{o} = \{\theta_{o1}, \theta_{o2}, \ldots, \theta_{oN}\}$, where $\theta_{on}, n= 1, \ldots, N,$ are
\begin{equation}
\label{eq:de_crossover} 
    \theta_{on} =
    \begin{cases}
        \theta_{vn}, & \text{if $\alpha_{on}$ $\leq \mathsf{CR}$ or $n = n_{\mathrm{rand}}$},\\
        \theta_{pn}, & \text{otherwise},
    \end{cases}
\end{equation}
where $\mathsf{CR}$ is a crossover rate; $\alpha_{on} \sim \mathcal{U}([0,1])$; and $n_{\mathrm{rand}}$ is an integer selected randomly from $[1, N]$ to ensure that $\pmb{\theta}_{o}$ gets at least one component from $\pmb{\theta}_{v}$.

\subsubsection{Survival selection} After generating the trial solution $\pmb{\theta}_{o}$ using the above mutation and crossover, the fitness value of $
\pmb{\theta}_{o}$ is calculated using the fitness function that is defined as same as the objective function of problem \eqref{Prob:ULSumRate}. The fitness value of $\pmb{\theta}_{o}$ is then compared directly to its parent $\pmb{\theta}_{p}$, and the better solution is admitted to the next generation:
\begin{equation}
    \label{de:selection}
    \pmb{\theta}'_{p} = 
    \begin{cases}
        \pmb{\theta}_{o}, & \text{if } f(\pmb{\Phi}_{o}) \geq f(\pmb{\Phi}_{p}),\\
        \pmb{\theta}_{p}, & \text{otherwise,}
    \end{cases}
\end{equation}
where the trial and parent versions of the phase shift matrix are defined as  $\pmb{\Phi}_o = \mathrm{diag}([e^{j \theta_{o1}}, \ldots, e^{j \theta_{oN}}]^T)$ and $\pmb{\Phi}_p = \mathrm{diag}([e^{j \theta_{p1}}, \ldots, e^{j \theta_{pN}}]^T)$;  $\pmb{\theta}'_{p}$ is the solution that will replace $\pmb{\theta}_{p}$ in the next generation. The condition in \eqref{de:selection} ensures the non-decreasing objective function of \eqref{Prob:ULSumRate} along the generations.

\subsubsection{Parameter adaptation}
The performance of the algorithm is strongly influenced by the parameters $\mathsf{F}$ and $\mathsf{CR}$ due to their roles in generating new solutions, as stated in \eqref{eq:pbest}, \eqref{eq:ctpbest}, and \eqref{eq:de_crossover}.
Instead of fixing these parameters as in the standard DE, we integrate a method called success-history based parameter adaptation (SHADE) \cite{tanabe2013success} into our algorithm for adapting $\mathsf{F}$ and $\mathsf{CR}$ automatically. Due to the space limitation, we omit the details of the SHADE method in this paper. In general, for the $i$-th mutation strategy, we use two memories called $\mathsf{MCR}_i$ and $\mathsf{MF}_i$, each of size $H$, to store the information of successful crossover rate and scale factor values, i.e., values that help to generate better solutions in previous generations. These stored successful values are then used to guide the algorithm to generate the crossover rate and the scale factor in the future generations. According to the descriptions above, we come to the skeleton of the proposal to the phase shift design given in Algorithm~\ref{pseu:de} together with its convergence property stated in the following lemma. 
\begin{lemma} \label{Theoremv1}
Let us define $O_\delta^\ast$ the space of the  $\delta$-optimal phase shift solution  to problem~\eqref{Prob:ULSumRate}, which is
\begin{equation}
\mathcal{S}_\delta^\ast = \left\{ \pmb{\theta}^\ast \big| |f(\pmb{\Phi}^\ast) - f(\pmb{\Phi})| \leq \delta,   -\pi \preceq \pmb{\theta}^\ast \preceq  \pi \right\},
\end{equation}
where $f(\pmb{\Phi})$ is defined in \eqref{Prob:ULSumRate} and $f(\pmb{\Phi}^\ast) = \max_{\pmb{\Phi}} \sum_{k=1}^K w_k R_k$. After that, for a population $\mathcal{Q}$ of $I$ initial individuals of the phase shift vector $\{\pmb{\theta}_i \}_{i=1}^I$ in the feasible domain, Algorithm~\ref{pseu:de} converges in probability to one solution $\pmb{\theta}^\ast \in \mathcal{S}_\delta^\ast$, i.e.,
\begin{equation} \label{eq:ProBound}
\mathsf{Pr}(\pmb{\theta}^\ast \in \mathcal{S}_{\delta}^{\ast}) \geq 1 - \left( 1 - \mu( S_{\delta}^\ast) P_{\mathrm{ep}}   \right)^I,
\end{equation}
where $P_{\mathrm{ep}} \in [0,1]$ is the mutation probability of each individual and $\mu( S_{\delta}^{\ast})$ is a measure to the space $S_{\delta}^{\ast}$.
\end{lemma}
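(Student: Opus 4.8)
The plan is to combine the elitist, monotone nature of the survival-selection step with a uniform per-generation lower bound on the probability that some individual is pushed into the $\delta$-optimal set $\mathcal{S}_\delta^\ast$, and then pass to the limit over generations.

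First I would establish monotonicity of the incumbent. By the selection rule \eqref{de:selection}, for every index the surviving individual $\pmb{\theta}'_p$ satisfies $f(\pmb{\Phi}'_p) \geq f(\pmb{\Phi}_p)$, so the best objective value present in the population, $f_{\max}^{(t)} = \max_i f(\pmb{\Phi}_i^{(t)})$, is non-decreasing in the generation index $t$. Consequently, once the population contains a member of $\mathcal{S}_\delta^\ast$ it continues to do so in all subsequent generations, which means the events $E_t = \{\, \exists i : \pmb{\theta}_i^{(t)} \in \mathcal{S}_\delta^\ast \,\}$ are nested and $q_t := \mathsf{Pr}(E_t)$ is non-decreasing; hence $\mathsf{Pr}(\pmb{\theta}^\ast \in \mathcal{S}_\delta^\ast) = \lim_{t\to\infty} q_t \geq q_1$, and it suffices to lower-bound the probability of success over a single generation starting from an arbitrary population.

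Next I would lower-bound $q_1$. Fix one individual $\pmb{\theta}_p$. With probability $P_{\mathrm{ep}}$ it undergoes mutation; conditioned on that, applying \eqref{eq:pbest} or \eqref{eq:ctpbest}, then the binomial crossover \eqref{eq:de_crossover}, and finally clipping (or reflecting) each coordinate back into $[-\pi,\pi]$ yields a trial vector $\pmb{\theta}_o$ whose conditional law has full support on the feasible box $[-\pi,\pi]^N$. Writing $\mu(\cdot)$ for the corresponding (normalized) measure of a Borel subset of that box, the probability that $\pmb{\theta}_o \in \mathcal{S}_\delta^\ast$ is therefore at least $\mu(\mathcal{S}_\delta^\ast)$; if this occurs, \eqref{de:selection} retains a member of $\mathcal{S}_\delta^\ast$ because any point of $\mathcal{S}_\delta^\ast$ has objective within $\delta$ of the optimum and is thus accepted over the parent. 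Hence each individual \emph{fails} to deliver a point of $\mathcal{S}_\delta^\ast$ with probability at most $1 - \mu(\mathcal{S}_\delta^\ast) P_{\mathrm{ep}}$; combining the $I$ individuals gives $q_1 \geq 1 - (1 - \mu(\mathcal{S}_\delta^\ast) P_{\mathrm{ep}})^I$, which together with the monotonicity above yields \eqref{eq:ProBound}.

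The main obstacle will be rigorously justifying the full-support claim for the trial distribution. Pure DE mutation forms offspring from \emph{differences} of current population members, so a priori the reachable set is constrained by the present spread of the population; the argument has to rely on the randomness of the scale factor $\mathsf{F}$ drawn through the SHADE memories, on the component-wise clipping/reflection into $[-\pi,\pi]$, and on the fact that crossover always injects at least one mutated coordinate, to conclude that every positive-measure subset of $[-\pi,\pi]^N$ is hit with probability bounded below uniformly over populations. A secondary technical point is the near-independence of the $I$ per-individual mutation events, which share $\pmb{\theta}_{p\text{best}}$, $\pmb{\theta}_{r_1}$ and $\pmb{\theta}_{r_2}$: I would handle this by conditioning on the current population, noting that each conditional per-individual success probability is still at least $\mu(\mathcal{S}_\delta^\ast) P_{\mathrm{ep}}$, and combining via a product bound — or, more simply, by tracking a single designated individual across generations, which already produces the same geometric decay and hence the stated bound.
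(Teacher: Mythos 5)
Your route is genuinely different from the paper's. The paper does not construct the bound directly: it invokes a cited sufficient-condition theorem for convergence in probability of evolutionary algorithms (\cite[Theorem~1]{hu2013sufficient}) to assert the existence of a sequence $\eta(t_i)$ with $\mathsf{Pr}(\pmb{\theta}_i(t)\in\mathcal{S}_\delta^\ast)\geq 1-\eta(t_i)$, then \emph{defines} a measure $\mu(\mathcal{S}_\delta^\ast)$ as an integral of the densities induced by the two mutation operators over subsets of $\mathcal{S}_\delta^\ast$, and finally \emph{chooses} $\eta = 1-\left(1-\mu(S_\delta^\ast)P_{\mathrm{ep}}\right)^I$. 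You instead build the bound from first principles: elitist selection makes the event $\{\exists i:\pmb{\theta}_i^{(t)}\in\mathcal{S}_\delta^\ast\}$ absorbing (and your observation that a replacement of a $\delta$-optimal parent is itself $\delta$-optimal, since its objective is sandwiched between the parent's and the optimum, is correct and is not spelled out in the paper), and then a per-individual success probability of at least $\mu(\mathcal{S}_\delta^\ast)P_{\mathrm{ep}}$ combined over $I$ individuals gives the geometric bound. Your version is more transparent about where the exponent $I$ comes from and why the bound holds already after one generation.

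That said, the step you yourself flag as the main obstacle is a genuine gap, and it is the same gap the paper leaves open. The claim that the trial vector's conditional law has full support on $[-\pi,\pi]^N$ (or even just assigns mass at least $\mu(\mathcal{S}_\delta^\ast)>0$ to the target set, uniformly over populations) is false for the operators actually used: \eqref{eq:pbest} and \eqref{eq:ctpbest} generate offspring only from affine combinations of current population members, so the reachable set is confined to (a clipped image of) the affine span of the population; in the degenerate case where the population has collapsed, no new point is reachable at all, and the per-generation success probability is zero, not $\mu(\mathcal{S}_\delta^\ast)P_{\mathrm{ep}}$. The randomness of $\mathsf{F}$ through the SHADE memories does not repair this, since it only rescales difference vectors. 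A rigorous proof would need either an added perturbation with everywhere-positive density, or a population-dependent lower bound on the measure that is shown to stay bounded away from zero. The paper's proof hides the same issue inside the unproven positivity of $\int_{\mathcal{S}_{\delta1}^\ast} g(\pmb{\theta})\,d\pmb{\theta} + \int_{\mathcal{S}_{\delta2}^\ast} h(\pmb{\theta})\,d\pmb{\theta}$, so your proposal is no less rigorous than the published argument --- but neither closes this step, and you should not present the full-support claim as established.
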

\begin{proof}
The proof is to verify the existence of solution $\pmb{\theta}^\ast$ as Algorithm~\ref{pseu:de} improves the candidates along iterations. The detailed proof is available in the Appendix.
\end{proof}
\noindent Lemma~\ref{Theoremv1} offers two-fold: i) It confirms that each candidate of the phase shift vector enters the $\delta$-optimal solution space in probability; and ii) The convergence probability depends on the population size as shown in \eqref{eq:ProBound}. 

Regarding the computational complexity, the initialization step requires $\mathcal{O}(IN)$; in each generation, sorting the population to extract best solutions (used in Eq. \eqref{eq:pbest} and \eqref{eq:ctpbest}) requires $\mathcal{O}(I\log(I))$, mutation and crossover steps both require $\mathcal{O}(IN)$, the selection step requires $\mathcal{O}(I)$, and the parameter adaptation requires $\mathcal{O}(I)$ \cite{tanabe2013success}. Overall, the complexity of Algorithm \ref{pseu:de} is $\mathcal{O}(IN + G(I\log(I) + IN + IN + I + I))$ = $\mathcal{O}(GI\log(I) + GIN)$, where $G$ is the number of generations, $I$ is the population size, and $N$ is the number of phase shift elements.\footnote{Inspired by the maturity of the evolutionary algorithms, the improved DE-based phase shift design can be adapted to optimize the discrete phase shift coefficients. The adaptation holds particular interest for a future work.}

\begin{algorithm}[t]
    \caption{Improved DE-based phase shift design}
    \begin{algorithmic}[1]
        \label{pseu:de}
        \renewcommand{\algorithmicrequire}{\textbf{Input:}}
        \renewcommand{\algorithmicensure}{\textbf{Output:}}
        \REQUIRE Large-scale fading coefficients, spatial correlation matrices, channel estimation quality, bandwidth, and power coefficients. 
        \STATE Randomly initialize a population $\mathcal{Q}_P$ of $I$ individuals to the phase shift coefficients as in \eqref{eq:ga_init}.
        \STATE Calculate fitness for all individuals in $\mathcal{Q}_P$ using the objective function of problem~\eqref{Prob:ULSumRate}.
        \STATE Set the maximum number of generations $\mathrm{GEN}_{\mathrm{MAX}}$ and $\mathrm{GEN} \longleftarrow 0$.
        \WHILE{$\mathrm{GEN} < \mathrm{GEN}_{\mathrm{MAX}}$}
            \STATE Initialize the next generation population $\mathcal{Q}_{P'} \longleftarrow \O$;
            \FOR{each individual $\pmb{\theta}_{p}$ in $\mathcal{Q}_P$}
                \IF{ $\tilde{\alpha} \sim \mathcal{U}([0,1]) \leq \lambda$}
                    \STATE Generate mutant vector $\pmb{\theta}_{v}$ using  \eqref{eq:pbest}.
                \ELSE
                    \STATE Generate mutant vector $\pmb{\theta}_{v}$ using  \eqref{eq:ctpbest}.
                \ENDIF
                \STATE Generate the trial solution $\pmb{\theta}_{o}$ by combining $\pmb{\theta}_{p}$ and $\pmb{\theta}_{v}$ using  \eqref{eq:de_crossover}.
                \STATE Calculate fitness for for the trial solution $\pmb{\theta}_{o}$ using the objective function of problem~\eqref{Prob:ULSumRate}.
                \IF{$f(\pmb{\Phi}_{o}) \geq f(\pmb{\Phi}_{p})$}
                    \STATE $\mathcal{Q}_{P'} \longleftarrow \mathcal{Q}_{P'} \bigcup \{\pmb{\theta}_{o}\}$.
                \ELSE
                    \STATE $\mathcal{Q}_{P'} \longleftarrow \mathcal{Q}_{P'} \bigcup \{\pmb{\theta}_{p}\}$.
                \ENDIF
            \ENDFOR
            \STATE $\mathcal{Q}_{P} \longleftarrow \mathcal{Q}_{ P'}$.
            \FOR{each $i$-th mutation strategy}
                \STATE Update parameter memories $\mathsf{MF}_i$ and $\mathsf{MCR_i}$ as the SHADE method  \cite{tanabe2013success}.
            \ENDFOR
            \STATE Update the parameter $\lambda$ as in  \eqref{eq:lambda}.
            \STATE $\mathrm{GEN} \longleftarrow \mathrm{GEN} + 1$.
        \ENDWHILE
        \RETURN Best solution found.
    \end{algorithmic}
\end{algorithm}

 \begin{figure*}[t]
\begin{minipage}{0.24\textwidth}
    \centering
    \includegraphics[trim=0.5cm 0cm 0.6cm 0.5cm, clip=true, width=1.8in]{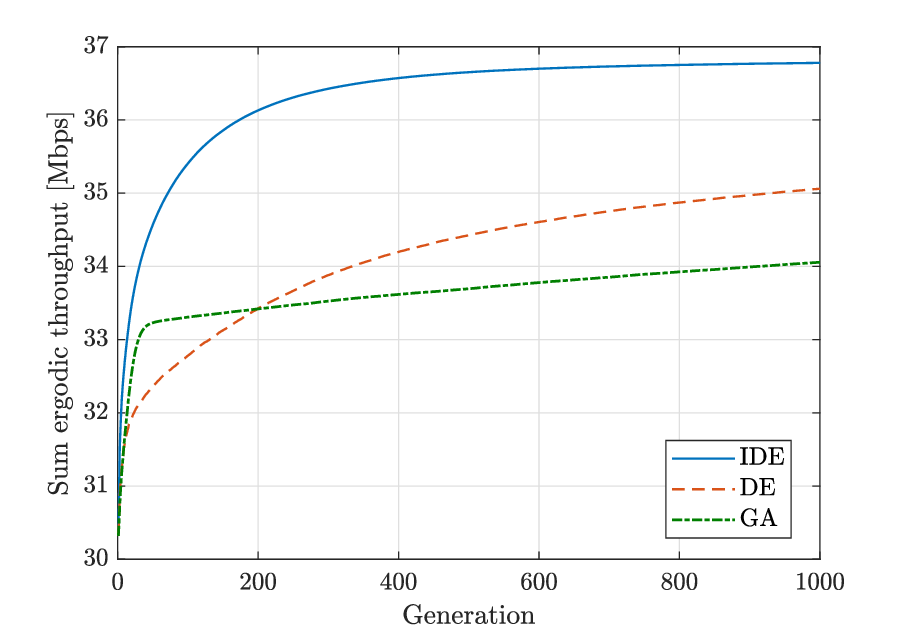} \\
    \fontsize{8}{8}{$(a)$}
    \vspace{-0.2cm}
\end{minipage}
\hfil
\begin{minipage}{0.24\textwidth}
    \centering
    \includegraphics[trim=0.5cm 0cm 0.6cm 0.5cm, clip=true, width=1.8in]{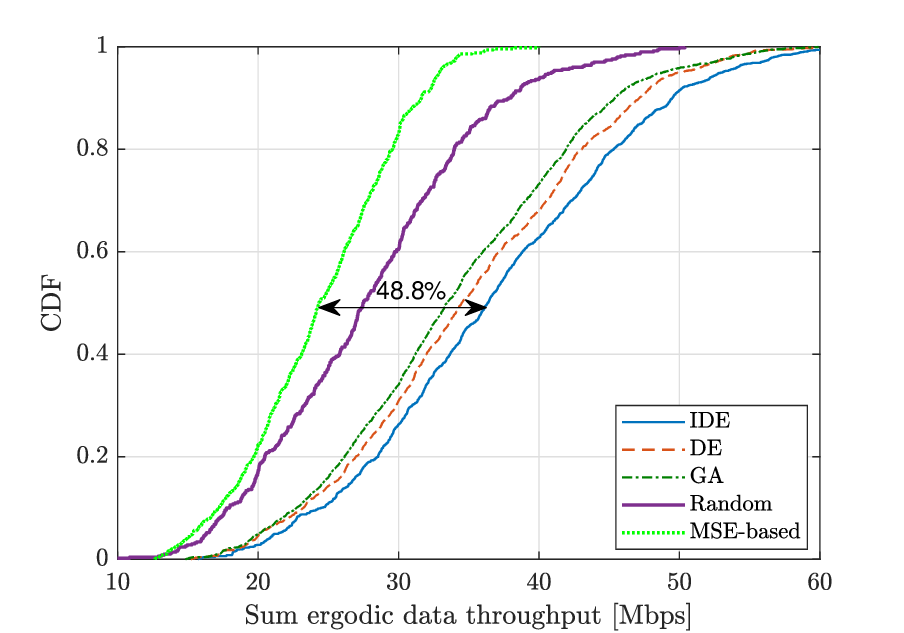}
    \fontsize{8}{8}{$(b)$}
    \vspace{-0.2cm}
\end{minipage}
\hfil
\begin{minipage}{0.24\textwidth}
    \centering
    \includegraphics[trim=0.5cm 0cm 1.0cm 0.5cm, clip=true, width=1.8in]{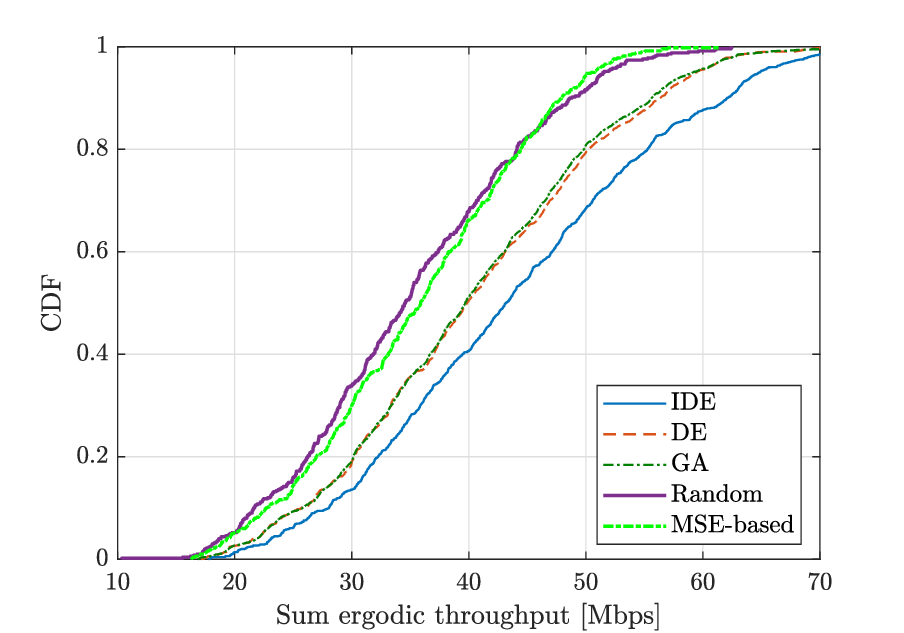}
    \fontsize{8}{8}{$(c)$}
    \vspace{-0.2cm}
\end{minipage}
\hfill
\begin{minipage}{0.24\textwidth}
    \centering
    \includegraphics[trim=0.6cm 0cm 1.0cm 0.5cm, clip=true, width=1.8in]{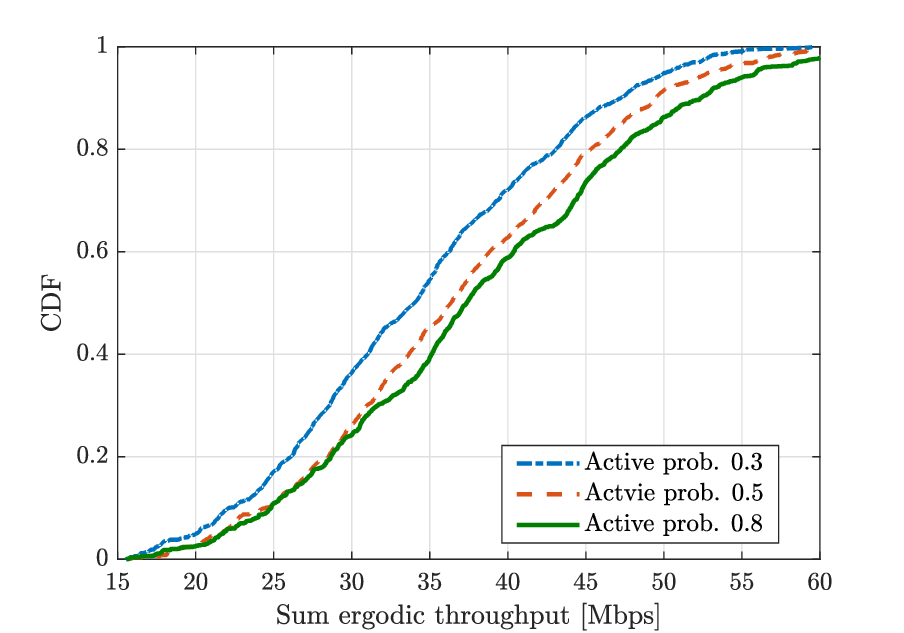}
    \fontsize{8}{8}{$(d)$}
    \vspace{-0.2cm}
\end{minipage}
\caption{The system performance of the benchmarks with different parameter settings: $(a)$ The convergence of the different evolutionary algorithms versus the generation index; $(b)$ The CDF of the sum ergodic throughput with $N=100$; $(c)$ The CDF of the sum ergodic throughput with $N=256$; and $(d)$ The CDF of sum ergodic throughput with the different active probabilities of direct links.}
    \label{fig2}
\vspace{-0.7cm}
\end{figure*}
\vspace*{-0.2cm}
\section{Numerical Results}
We consider an RIS-aided cell-free massive MIMO system comprising of $100$ APs serving $10$ users with a set of $5$ orthogonal pilot signals in the square area of $1$~km$^2$. The network topology is setup as in \cite{9665300} with the spatial correlation matrices defined by \cite{kammoun2020asymptotic}. The system bandwidth is $20$~MHz and the noise variance is $-92$~dBm. The direct links are unlocked with the probability $0.5$. The weights for the uplink sum ergodic are $w_k =1, \forall k$. 
Four benchmarks are involved for comparison: $i)$ \textit{Random phase shift design} (notated as Random) is widely used as a baseline in previous works \cite{wu2019intelligent}; $ii)$  \textit{Mean square error (MSE)-based phase shift design} (MSE-based) was proposed in \cite{9665300}, which obtains the global optimum as the direct links are totally blocked. However, this benchmark produces a sub-optimal solution under the presence of direct links with a non-neglectable probability; $iii)$ \textit{GA-based phase shift design} (GA) that exploits the genetic algorithm  \cite{harik1999compact}; $iv)$ \textit{DE-based phase shift design} (DE) that exploits the standard DE algorithm \cite{das2016recent}; and our \textit{Improved DE-based phase shift design} (IDE) is given in Algorithm~\ref{pseu:de}.

In Fig.~\ref{fig2}$(a)$, we plot the convergence trends of the employed evolutionary algorithms, including GA, DE, and our IDE. While all algorithms show significant improvements in the sum ergodic throughput throughout the evolution process, the IDE algorithm outperforms the others in terms of both convergence speed and quality of final solution. In comparison to the MSE-based baseline, the average improvement rate of the proposed IDE is about~$21.5\%$ for the network supported by an RIS equipped with $100$ scattering elements, and about $19.4\%$ if the RIS array gets bigger with $256$ scattering elements. 
For more details, we show the cumulative distribution function (CDF) of the sum ergodic throughput with the different number of scattering elements in Figs.~\ref{fig2}$(b)$ and \ref{fig2}$(c)$. Our proposed algorithms produce significantly better the spectral efficiency than the random and MSE-based phase shift designs under the presence of the direct links. The performance of MSE-based algorithm and the random phase shift design varies with the number of RIS elements since the number of optimization variables increases if the RIS is equipped with many scattering elements. It offers more degree of freedoms to obtain a good solution, and therefore the MSE-based algorithm can improve  the sum data throughput. 
Besides, the proposed algorithm provides the solution with the better sum ergodic throughput than the remaining evolutionary benchmarks. In particular, the gap between the improved DE-based and the other evolutionary phase shift designs becomes bigger as the number of scattering elements increase since a strategic mutation is required for a large RIS. The observation demonstrates the potentiality of the improved DE-based phase shift design for the large-scale systems. Finally, Fig.~\ref{fig2}$(d)$ plots the CDF of the sum ergodic throughput with different active probabilities of the direct links that demonstrate the contributions of the RIS in enhancing the spectral efficiency for harsh propagation environments.
\vspace*{-0.3cm}
\section{Conclusion}
\vspace*{-0.2cm}
This paper has manifested the benefits of the long-term phase shift design to improve the sum ergodic throughput of RIS-aided cell-free massive MIMO systems. The DE-based algorithm can  effectively handle the sophisticated nature of the sum throughput maximization with the presence of the RIS since they ideally do not rely on  the gradient of the objective function and constraints. In our considered settings, the long-term phase shift design obtained by the evolutionary algorithm produces nearly  $50\%$ higher sum ergodic throughput than the MSE-based solution at the median.
\vspace{-0.3cm}
\appendix \label{Appendix:Theoremv1}
By denoting $\mathcal{Q}(t) = \{ \pmb{\theta}_i (t) \}_{i=1}^I$ as the population that Algorithm~\ref{pseu:de} generates to solve problem~\eqref{Prob:ULSumRate} in which $t$ is the iteration index ($t=1,2, \cdots$), we recall the definition of the convergence in probability, which exists an $\pmb{\theta}_i(t)$ such that
$ \lim_{t \rightarrow \infty} \mathsf{Pr} (\mathcal{Q}(t) \cap \mathcal{S}_\delta^\ast ) = 1$.
Consequently, by exploiting the same methodology as in  \cite[Theorem~1]{hu2013sufficient}, there exists a vector $\pmb{\theta}_i(t)$ that satisfies
$\mathsf{Pr} (\pmb{\theta}_i(t) \in \mathcal{S}_\delta^\ast) \geq  1 - \eta(t_i)$,
where $\{ t_i | i = 1, 2, \ldots\}$ is a subsequence of the nature number set and $\eta (t_i)$ is a series such that $\sum_{i=1}^\infty \eta (t_i)$ diverges. Without loss of generality, one can select a positive number $\eta$ with $\eta(t_k) = \eta, \forall t_k$. Hence, the remaining activity is to define $\eta$. Indeed, let us formulate a measure to space $\mathcal{S}_\delta^\ast$ based on the mutation in \eqref{eq:pbest} and \eqref{eq:ctpbest} as
\begin{equation}
\mu(\mathcal{S}_\delta^\ast) = \mathsf{Pr}(\pmb{\theta}_i(t) \in \mathcal{S}_\delta^\ast) = \int_{\mathcal{S}_{\delta1}^\ast} g(\pmb{\theta}) d\pmb{\theta} +  \int_{\mathcal{S}_{\delta2}^\ast} h(\pmb{\theta}) d\pmb{\theta},
\end{equation}
where $g(\pmb{\theta})$ and $h(\pmb{\theta})$ are the probability density functions related to the randomness in \eqref{eq:pbest} and \eqref{eq:ctpbest}. The measures $\mathcal{S}_{\delta1}^\ast$ and $\mathcal{S}_{\delta2}^\ast$ are adopted to maintain $\pmb{\theta}_i(t) \in \mathcal{S}_\delta^\ast$. We can choose
$\eta = 1 - \left( 1 - \mu( S_{\delta}^\ast) P_{\mathrm{ep}}   \right)^I$,
which demonstrates that $\eta \rightarrow 0$ as $I \rightarrow \infty$ since $P_{\mathrm{eq}}$ increases, the diversity of the population in Algorithm~\ref{pseu:de} will gradually improve. We therefore obtain the result as in the lemma.
\vspace{-0.3cm}
\bibliographystyle{IEEEtran}
\bibliography{IEEEabrv,refs}
\end{document}